%% file: main.tex
\documentclass[cameraready]{Interspeech}
\title{Rethinking Entropy Minimization in Test-Time Adaptation for Autoregressive Models}
\author[equalcontribution]{Wei-Ping}{Huang}
\author[equalcontribution]{Chee-En}{Yu}
\author[correspondingauthor]{Guan-Ting}{Lin}
\author[correspondingauthor]{Hung-yi}{Lee}
\address{
    Graduate Institute of Communication Engineering, National Taiwan University, Taipei, Taiwan
}

\email{\{thomas1232121, allenyu172, daniel094144\}@gmail.com, hungyilee@ntu.edu.tw}

\keywords{Test-time Adaptation, Autoregressive Model, Automatic Speech Recognition}

\usepackage{comment}
\usepackage{mathtools}
\usepackage{amsmath, amsthm, amsfonts}
\usepackage{graphicx}
\usepackage{multirow}
\usepackage{pifont}
\usepackage[dvipsnames]{xcolor}
\usepackage{booktabs, adjustbox}
\newtheorem{theorem}{Theorem}

\begin{document}

\maketitle

\begin{abstract}
    % 1000 characters. ASCII characters only. No citations.
    Test-Time Adaptation (TTA) via entropy minimization (EM) has proven effective for classification tasks, yet its application to generative autoregressive models remains theoretically fragmented. Existing approaches typically rely on distinct heuristics, such as teacher forcing with pseudo labels or policy-gradient-based reinforcement learning, without a unified mathematical foundation. In this work, we resolve this discrepancy by deriving a rigorous formulation of EM tailored to autoregressive models. We show that the exact objective naturally decomposes into a token-level policy gradient loss and a token-level entropy loss, and we reinterpret prior methods as partial realizations of this unified formulation. Using Whisper ASR as a testbed, we demonstrate that our approach consistently improves performance across more than 20 diverse domains, including acoustic noise, accents, and multilingual settings.
\end{abstract}

\section{Introduction}

% TTA overview, EM definition
Test-Time Adaptation (TTA) has emerged as a promising paradigm for addressing the distribution shifts of real-world data at inference time. By adapting the source model using only test data, TTA enhances robustness without requiring access to the original training distribution. Among the various TTA types, episodic TTA~\cite{memo} focuses on the \textit{one-sample} learning problem. Specifically, for each test input $q$, the model parameters $\theta$ are optimized through an unsupervised objective $\mathcal{L}(\theta; q)$ before inferring the final prediction, after which the parameters are reset. Although various objectives have been proposed to guide this online refinement, the literature predominantly centers on entropy minimization (EM), where the model is encouraged to reduce the uncertainty of its predictive distribution $\pi_\theta$ over the input $q$:
\begin{equation}
    \mathcal{L}(\theta; q) = \mathbb{E}_{y \sim\pi_\theta(\cdot|q)} [-\log \pi_\theta(y|q)].
\end{equation}
Owing to its simplicity and effectiveness in the absence of external supervision, EM has become a fundamental building block for TTA across computer vision~\cite{memo, tent, eata, cotta}, natural language processing~\cite{agarwal2025unreasonable, gao2025oneshotentropyminimization, slot}, and speech processing~\cite{suta, sgem}.

% Problem of EM for AR models
While EM is straightforward for classification, its application to generative, autoregressive architectures is non-trivial. The primary challenge lies in the high-dimensional combinatorial nature of sequence generation, where the prediction $y$ is a sequence of dependent tokens rather than a single categorical label. This complexity has led to a fragmented landscape of implementation strategies, revealing a theoretical divide in \textbf{\textit{how the EM objective should be implemented for autoregressive models}}. In \cite{gao2025oneshotentropyminimization, sgem, cea, slmtta}, EM for autoregressive models is reduced to teacher-forcing-style adaptation using pseudo-labels. This approach takes the concatenation of the test input and pseudo-label as the full input sequence and minimizes the entropy of the output distribution specifically at each token position within the pseudo-label sequence. Conversely, in ~\cite{agarwal2025unreasonable}, EM is approached from a Reinforcement Learning (RL) perspective, treating entropy as a cost to be optimized via policy gradients. This discrepancy has created a landscape where both approaches rely on distinct heuristics while overlooking the mathematically correct gradient expression for EM. Consequently, the relationship between these heuristic methods and a theoretically rigorous EM remains unclear. We find that these approaches lead to incorrect EM that fail to capture the full probabilistic structure of autoregressive models.

\begin{figure}
  \centering
  \includegraphics[width=\linewidth]{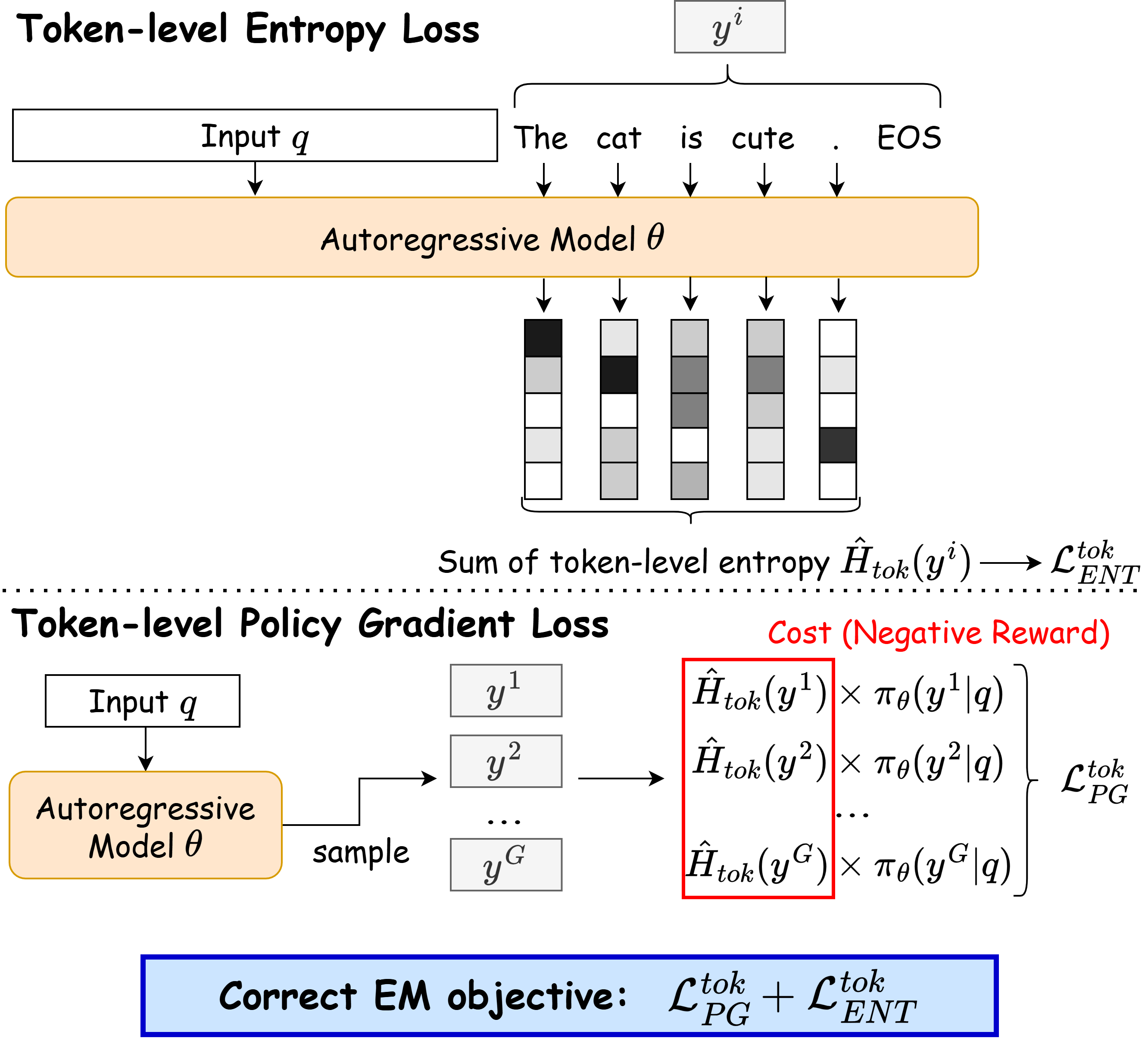}
  \caption{The correct EM objective for the autoregressive model decomposes into a token-level policy gradient loss and a token-level entropy loss. See Section~\ref{sec:method} for the details.}
  \label{fig:overall}
\end{figure}

% Our theoretical solution
In this work, we resolve the discrepancy by establishing a mathematically rigorous gradient expression of EM specifically designed for autoregressive models. By deriving the correct objective, we demonstrate that it naturally decomposes into two essential components: a token-level policy gradient loss and a token-level entropy loss. The decomposition is shown in Figure~\ref{fig:overall}. We reveal that prior teacher-forcing and RL-based approaches are not competing methodologies, but rather partial realizations of a single, unified objective. Our framework thus provides a principled theoretical foundation for EM for any autoregressive model, and clarifies the mathematical linkages to the prior works.

% Empirical validation of our solution
Though the formulation is general, we focus our empirical validation on Automatic Speech Recognition (ASR) using Whisper~\cite{whisper}, a widely adopted foundation model that serves as an ideal testbed due to its sequence-to-sequence structure. By applying the derived objective, we demonstrate that this principled approach to TTA significantly enhances performance across a wide range of scenarios. To the best of our knowledge, this is the first study to perform TTA on the Whisper model, providing a large-scale empirical validation of our theoretical framework.

Our contributions can be summarized as follows:
\begin{itemize}
    \item We identify the mathematically correct EM formulation for autoregressive models and elucidate the relationship between our proposed objective and prior heuristic-based works.

    \item Based on our formulation, we successfully perform TTA on Whisper ASR across more than 20 diverse domains, covering acoustic noise, accents, and multilingual scenarios.

    \item We provide a comprehensive comparison of different objectives, offering insights into their characteristics.
\end{itemize}

\section{Related Work}
Entropy minimization has established itself as a fundamental building block for TTA across diverse modalities. The foundational success of EM-based TTA is largely demonstrated in tasks where the model's predictive distribution is treated as a set of independent categorical variables. In computer vision, TENT~\cite{tent} pioneers optimizing batch-normalization parameters via EM, and subsequent works such as CoTTA~\cite{cotta} and SAR~\cite{sar} extend this paradigm to address challenges in continual distribution shifts. A similar trend exists in the speech community with SUTA~\cite{suta} and subsequent works~\cite{litta, suta-lm}, which introduce TTA for ASR by applying EM to non-autoregressive models like wav2vec 2.0-CTC~\cite{NEURIPS2020_92d1e1eb}. In these works, the entropy objective is a summation across independent frames or images, making the corresponding gradient computation straightforward.

As models have transitioned toward generative, autoregressive architectures where tokens exhibit temporal dependencies, implementing EM has led to different heuristics, primarily divided between teacher-forcing and RL paradigms. The first paradigm utilizes pseudo labels, taking the concatenation of the test input and pseudo-label as the full input sequence and minimizing the entropy of the output distribution specifically at each token position within the pseudo-label sequence. In natural language processing, \cite{gao2025oneshotentropyminimization} employs this one-shot EM technique for mathematical reasoning in LLMs. In speech processing, SGEM~\cite{sgem} and CEA~\cite{cea} adopt this approach to extend TTA to Transducers~\cite{burchi2021efficientconformerprogressivedownsampling}, an autoregressive ASR architecture. Recently, SLM-TTA~\cite{slmtta} also applies similar heuristics to modern generative Speech Language Models for speech translation and spoken QA. In parallel, the second paradigm approaches EM from an RL perspective, treating entropy as a scalar reward or cost to be optimized via policy gradients. Recent studies in natural language processing~\cite{empo, seedgrpo} utilize semantic entropy~\cite{kuhn2023semantic} within RL-based frameworks to guide model behavior. Notably, \cite{agarwal2025unreasonable} provides the first systematic comparison between these two heuristics under a post-training setting, highlighting the empirical differences. 

The divide between these two strategies in prior work creates ambiguity and leaves the theoretical connection between the heuristics and the true entropy minimization objective unclear. We resolve this ambiguity by providing a rigorous derivation that clarifies their relationship.

% \subsection{On the Pitfalls of Gradient Estimation}
% The mathematical treatment of sequence-level objectives often centers on the challenge of differentiating expectations over discrete combinatorial spaces. Our analysis builds upon a similar line of reasoning in recent work examining the pitfalls of KL-divergence gradient estimation for autoregressive models~\cite{tang2025pitfallskldivergencegradient}. Just as \cite{tang2025pitfallskldivergencegradient} showed that common implementations frequently omit critical terms in the gradient expression, we demonstrate that a parallel issue exists for the EM objective in TTA for autoregressive models. Our work thus provides a theoretical foundation for TTA that has been lacking in the transition from simple classification settings to sequence-level generation.

\section{Method}
\label{sec:method}
We first establish the formal notation for autoregressive generation and the entropy estimation in Section~\ref{sec:math-est}. In Section~\ref{sec:math-em-tok} and \ref{sec:math-em-seq}, we present our primary theoretical contribution of the derivation of the mathematically complete gradient expression of EM for autoregressive models. Section~\ref{sec:link} elucidates the theoretical linkage between our unified formulation and prior teacher-forcing and RL-based approaches. Finally, we incorporate practical implementation tricks in Section~\ref{sec:tricks}.

\subsection{Estimate Entropy for Autoregressive Models}
\label{sec:math-est}
Denote $\Omega$ as all possible output sequences with finite length that end with the EOS token. Denote $\pi_{\theta}(\cdot|q)$ as the autoregressive model's output distribution over $\Omega$ when the input is $q$~\footnote{In ASR, input $q$ is the speech and output $\bm{y}$ is the transcription.}. For any sequence $\bm{y}=y_1y_2\cdots y_{|\bm{y}|}\in\Omega$, denote its probability as $\pi_{\theta}(\bm{y}|q)$. Note that we drop the notation $q$ for convenience since it is fixed throughout the analysis. The entropy of the model with respect to input $q$ can be written as
\begin{align}
    H(\pi_{\theta})=\mathbb{E}_{\bm{y}\sim \pi_{\theta}}[-\log (\pi_{\theta}(\bm{y}))].
\end{align}
Following~\cite{agarwal2025unreasonable}, we consider two methods for empirically estimating the entropy.

\noindent\textbf{Sequence-level entropy estimator} is computed based on the distribution of the output sequences produced by the model. In practice, it samples a full sequence $\bm{y}\sim\pi_{\theta}$ and computes the total log-probability.
\begin{align}
    \hat{H}_{seq}(\bm{y}) = -\log\pi_{\theta}(\bm{y}) = -\sum_{t=1}^{|\bm{y}|}\log\pi_{\theta}(y_t|\bm{y}_{<t}),
\end{align}
where $\bm{y}$ is sampled from $\pi_{\theta}$.

\noindent\textbf{Token-level entropy estimator} computes the entropy based on the token distribution at each step. In practice, it sums per-token entropy across time steps. At generation step $t$, the model produces a probability distribution over a vocabulary $\mathcal{V}$. The estimator is defined as
\begin{align}
    \hat{H}_{tok}(\bm{y}) = \sum_{t=1}^{|\bm{y}|}\mathcal{H}(\pi_{\theta}(\cdot|\bm{y}_{<t})).
\end{align}
where $\mathcal{H}(\pi_{\theta}(\cdot|\bm{y}_{<t}))= -\sum_{j\in\mathcal{V}}\pi_{\theta}(j|\bm{y}_{<t})\log\pi_{\theta}(j|\bm{y}_{<t})$.

While these estimators appear frequently in existing literature~\cite {gao2025oneshotentropyminimization, sgem, cea, slmtta}, a rigorous justification for their validity is often omitted. Specifically, the unbiasedness of the sequence-level estimator, $\hat{H}_{seq}$, follows directly from the definition of the entropy. However, the unbiasedness of the token-level estimator, $\hat{H}_{tok}$, is less trivial. To ground our method in theoretical consistency, we provide a formal proof below.

\begin{theorem}
\label{thm:unbias}
If the entropy is finite, i.e., $H(\pi_\theta) < \infty$, then the token-level entropy estimator $\hat{H}_{tok}$ is an unbiased estimator of the entropy $H(\pi_\theta)$.
\end{theorem}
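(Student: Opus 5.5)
We want to show $\mathbb{E}_{\bm{y}\sim\pi_\theta}[\hat{H}_{tok}(\bm{y})] = \mathbb{E}_{\bm{y}\sim\pi_\theta}[\hat{H}_{seq}(\bm{y})]$; the latter equals $H(\pi_\theta)$ by definition. The plan is to rewrite both sides as sums over time steps $t$ and match them term by term via the tower property (conditioning on the prefix $\bm{y}_{<t}$). Write the per-token surprisal as $\ell_t(\bm{y}) = -\log\pi_\theta(y_t|\bm{y}_{<t})\ge 0$ for $t\le|\bm{y}|$, and set $\ell_t(\bm{y})=0$ for $t>|\bm{y}|$. Then $\hat{H}_{seq}(\bm{y})=\sum_{t\ge1}\ell_t(\bm{y})$. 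Likewise, with $h_t(\bm{y})=\mathcal{H}(\pi_\theta(\cdot|\bm{y}_{<t}))\,\mathbf{1}[t\le|\bm{y}|]\ge0$, we have $\hat{H}_{tok}(\bm{y})=\sum_{t\ge1}h_t(\bm{y})$.

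The key step is the identity $\mathbb{E}[\ell_t]=\mathbb{E}[h_t]$ for each fixed $t$. The event $\{t\le|\bm{y}|\}$ is exactly the event that the prefix $\bm{y}_{<t}$ contains no EOS, and this event is determined by $\bm{y}_{<t}$. Conditional on such a prefix, the autoregressive structure gives $y_t\sim\pi_\theta(\cdot|\bm{y}_{<t})$. Hence
\begin{align}
\mathbb{E}[\ell_t(\bm{y})\,|\,\bm{y}_{<t}] = \mathbf{1}[\text{no EOS in }\bm{y}_{<t}]\;\mathcal{H}(\pi_\theta(\cdot|\bm{y}_{<t})) = h_t(\bm{y}).
\end{align}
Taking expectations yields $\mathbb{E}[\ell_t]=\mathbb{E}[h_t]$. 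Concretely, this amounts to summing over all EOS-free prefixes $\bm{p}$ of length $t-1$, weighted by their marginal probability $\pi_\theta(\bm{p})=\prod_{s}\pi_\theta(p_s|\bm{p}_{<s})$. I will verify that this marginal-prefix probability is consistent with the distribution over $\Omega$; that is, $\Omega$ carries the full mass, with sequences terminating almost surely. I will take this as part of the setup, and it is implied by $\pi_\theta$ being a distribution on $\Omega$.

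The main obstacle is interchanging $\mathbb{E}$ with the infinite sum over $t$, since sequence lengths are unbounded. Because all $\ell_t,h_t$ are nonnegative, Tonelli's theorem (monotone convergence) justifies the interchange unconditionally:
\begin{align}
\mathbb{E}[\hat{H}_{tok}]=\sum_{t\ge1}\mathbb{E}[h_t]=\sum_{t\ge1}\mathbb{E}[\ell_t]=\mathbb{E}[\hat{H}_{seq}]=H(\pi_\theta).
\end{align}
The finiteness hypothesis $H(\pi_\theta)<\infty$ then guarantees that these are genuine finite equalities, not $\infty=\infty$. It follows that $\hat{H}_{tok}$ is integrable and unbiased. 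Without nonnegativity one would need dominated convergence, so I will stress that the nonnegativity of surprisal and of entropy is what makes the argument clean.
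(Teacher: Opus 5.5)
Your proposal is correct and follows essentially the same route as the paper. Both arguments split the entropy into per-step terms $\mathbb{E}[-\mathbb{I}[|\bm{y}|\geq t]\log\pi_\theta(y_t|\bm{y}_{<t})]$ and match each one to $\mathbb{E}[\mathbb{I}[|\bm{y}|\geq t]\mathcal{H}(\pi_\theta(\cdot|\bm{y}_{<t}))]$ by conditioning on the prefix; nonnegativity and monotone convergence (Tonelli) then justify swapping the sum over $t$ with the expectation, and your explicit check that prefix marginals equal the product of conditionals (because $\pi_\theta$ has full mass on $\Omega$) makes precise a point the paper leaves implicit in its setup.
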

\begin{proof}
Since $H(\pi_\theta) < \infty$, by the monotone convergence theorem, we can decompose sequence-level entropy into an infinite sum of token-wise terms.
\begin{subequations}
    \begin{align}
        H(\pi_{\theta}) &= \mathbb{E}_{\bm{y}\sim \pi_{\theta}} \left[\sum_{t=1}^{|\bm{y}|}-\log\pi_{\theta}(y_t|\bm{y}_{<t})\right]\\
        &=\sum^{\infty}_{t=1}\mathbb{E}_{\bm{y}\sim \pi_{\theta}}[-\mathbb{I}[|\bm{y}|\geq t]\log\pi_{\theta}(y_t|\bm{y}_{<t})], \label{eq:term}
    \end{align}
\end{subequations}
where $\mathbb{I}$ is the indicator function.

Let $\bm{Y}$ be the random variable representing the outcome sampled from $\pi_{\theta}$, $\bm{Y}_{<k+1}$ be the random variable representing the prefix of $\bm{Y}$ with length $k$ (if $\bm{Y}$ has length less than $k$, assign $\bm{Y}_{<k+1}=\phi$), and $\bm{Y}_k$ be the random variable representing the $k$-th token of $\bm{Y}$. By the law of total expectation, condition the prefix $\bm{Y}_{<t}$ on the $t$-th token-wise term and get
\begin{subequations}
    \begin{align}
        &\mathbb{E}_{\bm{y}\sim \pi_{\theta}}[-\mathbb{I}[|\bm{y}|\geq t]\log\pi_{\theta}(y_t|\bm{y}_{<t})] \label{eq:est-a} \\
        &=\mathop{\mathbb{E}}_{\bm{Y}_{<t}}\left[\mathop{\mathbb{E}}_{\bm{Y}|\bm{Y}_{<t}}\bigg[-\mathbb{I}[|\bm{Y}|\geq t]\log\pi_{\theta}(\bm{Y}_t|\bm{Y}_{<t})\bigg]\right] \label{eq:est-b} \\
        &=\mathop{\mathbb{E}}_{\bm{Y}_{<t}}\left[\mathbb{I}[|\bm{Y}|\geq t]\mathcal{H}(\pi_{\theta}(\cdot|\bm{Y}_{<t}))\right] \label{eq:est-c} \\
        &=\mathop{\mathbb{E}}_{\bm{Y}_{<t}}\left[\mathop{\mathbb{E}}_{\bm{Y}|\bm{Y}_{<t}}\bigg[\mathbb{I}[|\bm{Y}|\geq t]\mathcal{H}(\pi_{\theta}(\cdot|\bm{Y}_{<t}))\bigg]\right] \label{eq:est-d} \\
        &=\mathbb{E}_{\bm{y}\sim\pi_{\theta}}\left[\mathbb{I}[|\bm{y}|\geq t]\mathcal{H}(\pi_{\theta}(\cdot|\bm{y}_{<t}))\right], \label{eq:est-e}
    \end{align}
\end{subequations}

To see that \ref{eq:est-b} equals \ref{eq:est-c}, consider two cases. When $\bm{Y}_{<t}=\phi$ or $\bm{Y}_{<t}$ ends with the EOS token, $\mathbb{I}[|\bm{Y}|\geq t]=0$; otherwise $\mathbb{I}[|\bm{Y}|\geq t]=1$, and the conditional expectation given prefix $\bm{Y}_{<t}$ is equivalent to the conditional expectation over the next token, which is exactly the token entropy at the $t$-th step. To see that \ref{eq:est-c} equals \ref{eq:est-d}, note that $\mathbb{I}[|\bm{Y}|\geq t]$ is fully determined by prefix $\bm{Y}_{<t}$, so adding the conditional expectation remains unchanged. \ref{eq:est-d} equals \ref{eq:est-e} again follows from the law of total expectation.

Finally, replace each term in \ref{eq:term} with \ref{eq:est-e} and swap the infinite sum and the expectation back by the monotone convergence theorem.
\begin{subequations}
    \begin{align}
        H(\pi_{\theta})&=\sum^{\infty}_{t=1}\mathbb{E}_{\bm{y}\sim \pi_{\theta}}[-\mathbb{I}[|\bm{y}|\geq t]\log\pi_{\theta}(y_t|\bm{y}_{<t})]\\
        &=\sum_{t=1}^{\infty}\mathbb{E}_{\bm{y}\sim\pi_{\theta}}[\mathbb{I}[|\bm{y}|\geq t]\mathcal{H}(\pi_{\theta}(\cdot|\bm{y}_{<t}))]\\
        &=\mathbb{E}_{\bm{y}\sim\pi_{\theta}}\left[\sum_{t=1}^{|\bm{y}|}\mathcal{H}(\pi_{\theta}(\cdot|\bm{y}_{<t}))\right]\\
        &=\mathbb{E}_{\bm{y}\sim\pi_{\theta}}[\hat{H}_{tok}(\bm{y})].
    \end{align}
\end{subequations}

The token-level entropy estimator $\hat{H}_{tok}$ is an unbiased estimator of the entropy $H(\pi_\theta)$, as we desired.
\end{proof}

\subsection{Minimizing Entropy by Differentiating through the Token-level Estimator}
\label{sec:math-em-tok}
To perform entropy minimization, take the derivative through the expectation over the token-level entropy estimator.
% \begin{equation}
%     \resizebox{1.0\linewidth}{!}{$
%         \begin{aligned}
%             &\nabla_{\theta}H(\pi_{\theta})\\
%             &=\nabla_{\theta} \mathbb{E}_{x\sim \pi_{\theta}}[\hat{H}_{tok}(x)] \\
%             &= \int_x \nabla_{\theta}(\pi_{\theta}(x)\hat{H}_{tok}(x))dx \\
%             &= \int_x \hat{H}_{tok}(x)\nabla_{\theta}\pi_{\theta}(x) dx + \int_x\pi_{\theta}(x)\nabla_{\theta}\hat{H}_{tok}(x)dx \\
%             &= \int_x \pi_{\theta}(x)\hat{H}_{tok}(x)\nabla_{\theta}\log\pi_{\theta}(x) dx + \int_x \pi_{\theta}(x)\nabla_{\theta}\hat{H}_{tok}(x)dx \\
%             &= \mathbb{E}_{x\sim \pi_{\theta}}[\hat{H}_{tok}(x)\nabla_{\theta}\log\pi_{\theta}(x)] + \mathbb{E}_{x\sim \pi_{\theta}}[\nabla_{\theta}\hat{H}_{tok}(x)].
%         \end{aligned}$
%     }    
% \end{equation}
\begin{subequations}
    \begin{align}
        &\nabla_{\theta}H(\pi_{\theta})\\
        &=\nabla_{\theta} \mathbb{E}_{\bm{y}\sim \pi_{\theta}}[\hat{H}_{tok}(\bm{y})] \\
        &= \int_{\bm{y}} \nabla_{\theta}(\pi_{\theta}(\bm{y})\hat{H}_{tok}(\bm{y}))d\bm{y} \label{eq:grad-c} \\
        &= \int_{\bm{y}} \hat{H}_{tok}(\bm{y})\nabla_{\theta}\pi_{\theta}(\bm{y}) d\bm{y} + \int_{\bm{y}}\pi_{\theta}(\bm{y})\nabla_{\theta}\hat{H}_{tok}(\bm{y})d\bm{y} \label{eq:grad-d} \\
        \begin{split}
            &= \int_{\bm{y}} \pi_{\theta}(\bm{y})\hat{H}_{tok}(\bm{y})\nabla_{\theta}\log\pi_{\theta}(\bm{y}) d\bm{y} \\
            &\quad + \int_{\bm{y}} \pi_{\theta}(\bm{y})\nabla_{\theta}\hat{H}_{tok}(\bm{y})d\bm{y} 
        \end{split} \label{eq:grad-e} \\
        \begin{split}
        &= \mathbb{E}_{\bm{y}\sim \pi_{\theta}}[\hat{H}_{tok}(\bm{y})\nabla_{\theta}\log\pi_{\theta}(\bm{y})] \\
        &\quad + \mathbb{E}_{\bm{y}\sim \pi_{\theta}}[\nabla_{\theta}\hat{H}_{tok}(\bm{y})].
        \end{split}
    \end{align}
\end{subequations}
\ref{eq:grad-c} to \ref{eq:grad-d} follows from the chain rule, and \ref{eq:grad-d} to \ref{eq:grad-e} is the log-derivative trick.
The corresponding loss function is
\begin{align}
    \mathcal{L}_{EM}^{tok} = \underbrace{sg(\hat{H}_{tok}(\bm{y})) \log \pi_{\theta}(\bm{y})}_{\mathcal{L}_{PG}^{tok}} + \underbrace{\hat{H}_{tok}(\bm{y}) }_{\mathcal{L}_{ENT}^{tok}},
    \label{eq:em-tok}
\end{align}
where $sg(\cdot)$ is the stop gradient operator.

The loss function $\mathcal{L}_{EM}^{tok}$ decomposes into two components that address different aspects of the sequence distribution. We denote the first component as the \textbf{token-level policy gradient loss, $\bm{\mathcal{L}_{PG}^{tok}}$}. This term arises from the derivative of the expectation with respect to the sampling distribution, which is identical to the loss used in REINFORCE~\cite{REINFORCE} when the token-level entropy estimator is treated as cost. Consequently, $\mathcal{L}_{PG}^{tok}$ serves as a signal that adjusts the probability of generated trajectories to favor those with lower uncertainty.

The second component is the \textbf{token-level entropy loss, $\bm{\mathcal{L}_{ENT}^{tok}}$}, which represents the direct minimization of the estimator itself via the pathwise gradient. In stochastic optimization, the pathwise gradient accounts for the dependence of the cost function on the parameters, assuming the samples are fixed. It is important to note that while $\mathcal{L}_{ENT}^{tok}$ appears as a standard unsupervised loss, directly optimizing this term is not equivalent to minimizing the full expected entropy since the tokens are sampled from a distribution that is itself parameterized by $\theta$. Therefore, the token-level policy gradient loss $\mathcal{L}_{PG}^{tok}$ is required to account for how changes in the model parameters shift the distribution of trajectories.

\subsection{Minimizing Entropy by Differentiating through the Sequence-level Estimator}
\label{sec:math-em-seq}
On the other hand, we can also perform entropy minimization by taking the derivative through the expectation over the sequence-level entropy estimator.
\begin{subequations}
    \begin{align}
        &\nabla_{\theta}H(\pi_{\theta})\\
        &= \nabla_{\theta} \mathbb{E}_{\bm{y}\sim \pi_{\theta}}[\hat{H}_{seq}(\bm{y})] \\
        &= \int_{\bm{y}} \nabla_{\theta}(\pi_{\theta}(\bm{y})\hat{H}_{seq}(\bm{y}))d\bm{y} \\
        &= \int_{\bm{y}} \hat{H}_{seq}(\bm{y})\nabla_{\theta}\pi_{\theta}(\bm{y}) d\bm{y} + \int_{\bm{y}}\pi_{\theta}(\bm{y})\nabla_{\theta}\hat{H}_{seq}(\bm{y})d\bm{y} \\
        \begin{split}
            &= \int_{\bm{y}} \pi_{\theta}(\bm{y})\hat{H}_{seq}(\bm{y})\nabla_{\theta}\log\pi_{\theta}(\bm{y}) d\bm{y} \\
            &\quad + \int_{\bm{y}} \pi_{\theta}(\bm{y})\nabla_{\theta}\hat{H}_{seq}(\bm{y})d\bm{y}
        \end{split} \\
        \begin{split}
            &= \mathbb{E}_{\bm{y}\sim \pi_{\theta}}[\hat{H}_{seq}(\bm{y})\nabla_{\theta}\log\pi_{\theta}(\bm{y})] \\
            &\quad + \mathbb{E}_{\bm{y}\sim \pi_{\theta}}[\nabla_{\theta}\hat{H}_{seq}(\bm{y})].
        \end{split}
    \end{align}
\end{subequations}
Note that the second term vanishes, since
\begin{equation}
    \begin{aligned}
        \mathbb{E}_{\bm{y}\sim \pi_{\theta}}[\nabla_{\theta}\hat{H}_{seq}(\bm{y})] &= -\int_{\bm{y}} \pi_{\theta}(\bm{y})\nabla_{\theta}\log\pi_{\theta}(\bm{y})d\bm{y} \\
        &= -\int_{\bm{y}} \nabla_{\theta}\pi_{\theta}(\bm{y})d\bm{y} \\
        &= -\nabla_{\theta}\int_{\bm{y}}\pi_{\theta}(\bm{y})d\bm{y} \\
        &= -\nabla_{\theta}(1) = 0
    \end{aligned}
\label{eq:mc-vanish}
\end{equation}
The corresponding loss function becomes
\begin{align}
    \mathcal{L}_{EM}^{seq} = \underbrace{sg(\hat{H}_{seq}(\bm{y}))\log\pi_{\theta}(\bm{y})}_{\mathcal{L}_{PG}^{seq}}.
    \label{eq:em-seq}
\end{align}

Similarly, the only term $\mathcal{L}_{PG}^{seq}$ serves as the reinforcement signal, which guides the model to favor trajectories with higher probability. We denote $\mathcal{L}_{PG}^{seq}$ as the \textbf{sequence-level policy gradient loss}. While $\mathcal{L}_{EM}^{seq}$ and the previously defined token-level objective $\mathcal{L}_{EM}^{tok}$ appear as distinct formulations, they are functionally equivalent under gradient-based optimization. This stems from the fact that their respective gradients with respect to $\theta$ are both unbiased estimators for EM.

\subsection{Linkage to Existing Works: A Unified Perspective} 
\label{sec:link}
The derivation in Section~\ref{sec:math-em-tok} and \ref{sec:math-em-seq} allows us to position existing EM heuristics within a unified framework. In methods such as \cite{gao2025oneshotentropyminimization, sgem, cea, slmtta}, the adaptation process utilizes a teacher-forcing heuristic where the model first generates a pseudo-label by greedy decoding, and directly adopts the token-level entropy estimator as the loss function. Within our framework, this corresponds to optimizing only the token-level entropy loss $\mathcal{L}_{ENT}^{tok}$. As discussed earlier, optimizing $\mathcal{L}_{ENT}^{tok}$ in isolation is not mathematically equivalent to minimizing the full expected entropy, as it does not account for how parameter updates alter the probability of the sampled trajectories themselves.

Conversely, the RL-based heuristic presented in \cite{agarwal2025unreasonable} focuses only on the stochastic components of the gradient. Specifically, their EM-RL-token approach treats the token-level entropy estimator as a cost signal, which corresponds precisely to our token-level policy gradient loss $\mathcal{L}_{PG}^{tok}$.

Interestingly, \cite{agarwal2025unreasonable} also evaluates a sequence-level variant (EM-RL-sequence) and a fine-tuning variant (EM-FT). The EM-FT variant is essentially equivalent to the teacher-forcing heuristic, whereas EM-RL-sequence corresponds to our sequence-level policy gradient loss $\mathcal{L}_{PG}^{seq}$. In the sequence-level variant, the RL-based approach is inherently correct because the second term in the sequence-level derivation vanishes (see Eq.~\ref{eq:mc-vanish}), leaving the policy gradient as the sole component required for the exact gradient estimator.

Our derivation shows that EM for autoregressive models requires jointly optimizing both $\mathcal{L}_{PG}^{tok}$ and $\mathcal{L}_{ENT}^{tok}$. By relating our formulation to prior work, as summarized in Table~\ref{tab:link}, we observe that existing methods often optimize only a partial version of this objective.
\input{tables/link}

\subsection{Implementation Tricks}
\label{sec:tricks}
\subsubsection{Apply RL Baseline}
\label{sec:tricks-rl}
The REINFORCE estimator is notoriously difficult to optimize due to its high variance; consequently, incorporating a baseline is a standard technique for variance reduction. Following~\cite{agarwal2025unreasonable}, we utilize the leave-one-out baseline~\cite{rloo} and apply the token-level normalization strategy from DAPO~\cite{dapo} to the token-level policy gradient loss. For each input $q$, we sample $G$ responses $\bm{y}^1, \bm{y}^2, \cdots, \bm{y}^G$. The final token-level objective becomes
\begin{align}
    \mathcal{L}_{EM}^{tok} = \frac{1}{\sum_{i=1}^G |\bm{y}^i|} \sum_{i=1}^{G} \left(A_{tok}(\bm{y}^i) \log\pi_{\theta}(\bm{y}^i) + \hat{H}_{tok}(\bm{y}^i)\right),
\end{align}
where advantage
\begin{align}
    A_{tok}(\bm{y}^i) = sg(\hat{H}_{tok}(\bm{y}^i)) - \frac{1}{G-1}\sum_{j\neq i} sg(\hat{H}_{tok}(\bm{y}^j)).
\end{align}
Similarly, for the sequence-level objective, we have
\begin{align}
    \mathcal{L}_{EM}^{seq} = \frac{1}{\sum_{i=1}^G |\bm{y}^i|}\sum_{i=1}^{G} A_{seq}(\bm{y}^i) \log\pi_{\theta}(\bm{y}^i),
\end{align}
where advantage
\begin{align}
    A_{seq}(\bm{y}^i) = sg(\hat{H}_{seq}(\bm{y}^i)) - \frac{1}{G-1}\sum_{j\neq i} sg(\hat{H}_{seq}(\bm{y}^j)).
\end{align}

\subsubsection{Extension with Beam Search}
\label{sec:tricks-beam}
While the unbiasedness of $\hat{H}_{tok}$ is established under the assumption of random sampling responses from the output distribution $\pi_{\theta}$, we introduce a practical extension by running the algorithm over responses derived from beam search instead of random sampling, that is, take the top $G$ beams as $\bm{y}^1, \bm{y}^2, \cdots, \bm{y}^G$. This approach is conceptually analogous to priority sweeping in reinforcement learning, where the optimization prioritizes the most probable responses to facilitate faster convergence on high-quality signals. Although restricting the expectation to the top $G$ beams inevitably introduces a bias, and thus does not strictly adhere to the formulation in Section~\ref{sec:math-em-tok}, it serves as an effective trick in practice as shown in Section~\ref{sec:results}.

\section{Experiments}
\subsection{Datasets}
We focus our empirical validation on the ASR task. We experiment on three datasets:
%Corrupted Librispeech, L2Arctic, and Multilingual LibriSpeech. They correspond to different noises, accents, and languages, respectively.

\noindent\textbf{Corrupted Librispeech (LS-C)~\cite{dsuta}:} The dataset is constructed by adding noises from MS-SNSD into Librispeech \texttt{test-other} set. The noises include air conditioner (AC), airport announcement (AA), babble (BA), copy machine (CM), munching (MU), neighbors (NB), shutting door (SD), typing (TP), vacuum cleaner (VC), and Gaussian noise (GS), resulting in 10 different noises in total. The Signal-to-Noise Ratio (SNR) is set to 10 dB as in \cite{dsuta}.

\noindent\textbf{L2Arctic~\cite{l2arctic}:} A non-native English speech corpus consisting of utterances from second language (L2) learners originating from 6 countries with different first languages (L1): Arabic, Mandarin, Hindi, Korean, Spanish, and Vietnamese.

\noindent\textbf{Multilingual LibriSpeech (MLS)~\cite{mls}:} A multilingual speech corpus consisting of utterances in 7 languages: Dutch, French, German, Italian, Polish, Portuguese, and Spanish.

\subsection{Methods}
\label{sec:exp-methods}
To validate the derived objectives for EM, we adopt them for TTA on Whisper ASR. For each sample, we iteratively perform adaptation on Whisper for several steps before inferring the final prediction, after which the parameters are reset. During each adaptation step, we first generate $G$ candidate transcriptions. Next, we estimate the entropy using the proposed entropy estimator and perform EM using the objectives derived in Section~\ref{sec:math-em-tok} and \ref{sec:math-em-seq}. After updating the model parameters, we resample $G$ new transcriptions using the updated model for the subsequent step. We compare the following three methods:

% To evaluate the impact of the sequence-level objective ($\mathcal{L}_{EM}^{seq}$) and the token-level objective ($\mathcal{L}_{EM}^{tok}$), as well as the effect of different sampling strategies on TTA, we compare the following three methods:

\noindent\textbf{EM-seq:} This method first randomly samples $G$ transcriptions and calculates the entropy using a \textbf{sequence-level entropy estimator}. Entropy minimization is then performed using $\mathcal{L}_{EM}^{seq}$ as the loss function.

\noindent\textbf{EM-tok:} This method first randomly samples $G$ transcriptions and calculates the entropy using a \textbf{token-level entropy estimator}. Entropy minimization is then performed using $\mathcal{L}_{EM}^{tok}$ as the loss function.

\noindent\textbf{EM-tok-b:} Extend EM-tok by using beam search to generate the top $G$ candidate transcriptions instead of random sampling. %The remaining adaptation procedures are identical to EM-tok.

% greedy-EM
In TTA for ASR~\cite{sgem, cea, slmtta}, the teacher-forcing heuristic is the standard approach for EM. We therefore adopt this heuristic as our primary baseline. Precisely, given speech input $q$, for each adaptation step, we first generate a single transcription $\hat{\bm{y}}$ via greedy decoding, then minimizes the token-level entropy $\mathcal{L}_{ENT}^{tok}$ based on $\hat{\bm{y}}$. We denote this baseline method as \textbf{Greedy-EM}.
Note that we do not directly compare our approach to the methods in \cite{sgem, cea, slmtta} since these methods incorporate additional components, such as confidence thresholding~\cite{slmtta}, or other loss functions~\cite{sgem, cea}.
We establish a simplified baseline that uses only EM to isolate its precise impact on TTA.

\subsection{Implementation Details}
The implementations of EM-seq, EM-tok, and EM-tok-b follow the descriptions in Section~\ref{sec:exp-methods}. All hyperparameters are chosen through a held-out subset from CommonVoice~\cite{cv}, and we apply this optimal configuration across all target corpora.

We use the Whisper-base\footnote{https://huggingface.co/openai/whisper-base} model as the default, the number of adaptation steps is set to 10, and the number of sampled sequences is $G=16$. We optimize the model using the AdamW optimizer with a learning rate of 1e-3. During the adaptation phase, gradients are backpropagated solely to update Whisper’s LayerNorm parameters, encompassing both the encoder and decoder LayerNorms. After the 10-step update, we decode the utterance using greedy decoding on the adapted model to report the final word error rate. All experiments are conducted on a single NVIDIA GeForce RTX 3090 GPU.

\subsection{Results}
\label{sec:results}

\begin{table*}[t]
    \centering
    
    % --- 第一個表格：Noise ---
    \caption{WER (\%) of different TTA methods across 10 noises of Corrupted LibriSpeech dataset.}
    \label{tab:ls-results}
    \begin{adjustbox}{max width=\textwidth}
        \input{tables/ls-big-table} 
    \end{adjustbox}
    
    \vspace{4mm}
    
    % --- 第二個表格：Accent ---
    \caption{WER (\%) of different TTA methods on L2-Arctic dataset.}
    \label{tab:l2-results}
    \begin{adjustbox}{max width=\textwidth}
        \input{tables/l2-big-table}
    \end{adjustbox}
    
    \vspace{4mm}
    
    % --- 第三個表格：Multilingual ---
    \caption{WER (\%) of different TTA methods on Multilingual LibriSpeech dataset.}
    \label{tab:mls-results}
    \begin{adjustbox}{max width=\textwidth}
        \input{tables/mls-big-table}
    \end{adjustbox}
    
\end{table*}

% Tables~\ref{tab:ls-results}--\ref{tab:mls-results} compare the non-adapted baseline, and four test-time adaptation (TTA)
% variants—Greedy-EM, EM-seq, EM-tok, and EM-tok—b under three kinds of distribution
% shift: additive noise, English accents, and cross-lingual transfer.

\noindent\textbf{Adaptation to Additive Noise.}
As shown in Table~\ref{tab:ls-results}, the proposed EM-seq and EM-tok mitigate the impact of acoustic degradation across the ten noise conditions. The unadapted source model yields an average Word Error Rate (WER) of 22.53\,\%, and Greedy-EM reduces the average WER to 21.91\,\%. EM-seq and EM-tok yield lower average WERs of 21.34\,\% and 20.77\,\%, respectively. Moreover, extending the token-level objective with beam search (EM-tok-b) achieves an average WER of 19.15\,\% and represents the lowest WER across all ten noise conditions.

\noindent\textbf{Adaptation to Accent Shifts.}
Table~\ref{tab:l2-results} presents the performance on accented speech. The unadapted source model yields an average WER of 19.35\,\%, and Greedy-EM achieves an average WER of 18.77\,\%, but exhibits variance across different accent corpora; for example, it yields a higher WER than the source model on the Spanish split (19.47\,\% $>$ 18.31\,\%). EM-seq and EM-tok outperform both baselines, reducing the average WERs to 17.68\,\% and 17.05\,\%, respectively. Moreover, extending the token-level objective with beam search (EM-tok-b) yields the lowest average WER of 16.21\,\%.

\noindent\textbf{Generalization beyond English.} 
Table~\ref{tab:mls-results} evaluates the generalization of our TTA methods across different languages. Across these languages, Greedy-EM reduces the average WER from 24.58\,\% to 24.08\,\%. The average WERs for EM-seq and EM-tok are 24.17\,\% and 24.07\,\%, respectively. All three methods show minor improvements, which may be due to the smaller domain shift. Despite that, EM-tok-b still shows strong performance, achieving an average WER of 22.63\%.

\noindent\textbf{Overall Comparison.}
The results demonstrate that our proposed methods are effective across all three corpora, encompassing different noises, accents, and languages. By utilizing the mathematically complete gradient expression, EM-seq achieves comparable or lower WER than both baselines across 23 domains. Furthermore, EM-tok outperforms both baselines on nearly all domains, with the exception of German and Portuguese in MLS. Even in challenging scenarios where Greedy-EM fails, such as Spanish accents in L2-Arctic, both EM-seq and EM-tok continue to yield improvements. All these evidences showcase that our proposed methods are superior to existing heuristics.

On the other hand, we observe that EM-tok consistently yields lower WERs compared to EM-seq across the three corpora. We hypothesize that this is because ASR transcription errors predominantly occur at the token level. Consequently, minimizing through the token-level entropy estimator may provide more granular guidance for correction by considering the entire vocabulary distribution at each time step. In contrast, minimizing through the sequence-level estimator primarily considers the likelihood of the decoded tokens, which may offer less information for error correction in the ASR task.

Finally, by utilizing beam search transcriptions, EM-tok-b achieves the lowest WER among all methods. Optimizing over these higher-quality trajectories allows EM-tok-b to achieve better optimization with larger performance gains. While this approach introduces a degree of bias into the gradient estimation, it serves as a highly effective trick empirically. We provide further discussions on the impact of beam search in Section~\ref{sec:beam-search} and analyze its efficiency in Section~\ref{sec:eff}.

\section{Discussion}
\subsection{Components in Token-level Objective}
\label{sec:objectives-token}
\begin{figure*}
  \centering
  \includegraphics[width=0.9\linewidth]{figures/sample_ablation0.png}
  \caption{WER(\%) across varying sample sizes for Gaussian Noise (left), Spanish Accent (middle), and Polish (right) domains. Compares the full token-level objective (EM-tok) against its two components (PG-tok, ENT-tok).}
  \label{fig:sample-ablate0}
\end{figure*}
\begin{figure*}
  \centering
  \includegraphics[width=0.9\linewidth]{figures/sample_ablation1.png}
  \caption{WER(\%) across varying sample sizes for Gaussian Noise (left), Spanish Accent (middle), and Polish (right) domains. Compares EM-tok and the variants utilizing beam search transcriptions (EM-tok-b, PG-tok-b, and ENT-tok-b).}
  \label{fig:sample-ablate1}
\end{figure*}
To investigate the interaction between the token-level policy gradient loss and the token-level entropy loss, we evaluate them across varying sample sizes ($G \in \{1, 4, 16, 64\}$) under three distinct domains: Gaussian Noise (LS-GS-10), Spanish accents (L2-Spanish), and Polish (MLS-Polish). We compare EM-tok against two ablation baselines, \textbf{PG-tok}: using only the token-level policy gradient loss, and \textbf{ENT-tok}: using only the token-level entropy loss. Note that for the case $G=1$, the baseline trick described in Section~\ref{sec:tricks-rl} is unavailable and thus omitted.

As illustrated in Figure~\ref{fig:sample-ablate0}, when multiple samples are available, the token-level policy gradient loss provides substantial improvements, with both EM-tok and PG-tok consistently outperforming ENT-tok. Furthermore, the full EM-tok objective generally yields the best results, empirically validating that the mathematically complete gradient expression is superior to the heuristics used in prior work. However, with only a single sample, the token-level policy gradient loss becomes unstable. In this specific scenario, EM-tok and PG-tok are inferior to ENT-tok. Notably, while an increased sample size significantly stabilizes EM-tok and PG-tok, ENT-tok maintains nearly constant performance across all sample sizes. These results underscore that using the correct gradient expression with sufficient sampling is essential for successful EM in autoregressive models.

\subsection{Utilizing Beam Search Transcriptions}
\label{sec:beam-search}
It is important to determine whether the performance gains of the proposed EM-tok-b arise solely from the use of beam search transcriptions during adaptation. To investigate this, we compare EM-tok-b with a beam search baseline (beam size $=16$) that performs no parameter updates. Due to space limitations, we only report partial results on two randomly selected domains from each dataset.

\input{tables/beam-search}

As shown in Table~\ref{tab:em-vs-beam}, although beam search itself is an effective method for improving ASR outputs, EM-tok-b achieves superior performance in the majority of scenarios, particularly in domains characterized by significant acoustic and accent shifts. Beam search remains highly competitive and occasionally performs better in specific languages. These results suggest that decoding strategies and adaptation methods address different aspects of the problem. Understanding why certain domains favor one approach over the other, and how they can be effectively combined, requires further investigation.

We further examine how utilizing beam search transcriptions, as proposed in Section~\ref{sec:tricks-beam}, interacts with the token-level objectives. As shown in Figure~\ref{fig:sample-ablate1}, the overall trends remain consistent with those observed in Section~\ref{sec:objectives-token}. Additionally, utilizing beam search transcriptions (EM-tok-b) provides practical advantages, generally outperforming the random sampling variant (EM-tok), particularly at smaller sample sizes ($G\in\{4,16\}$). This advantage diminishes as the sample size increases to 64, where EM-tok benefits from more accurate gradient estimation.

A notable case arises in the Polish domain, where PG-tok-b demonstrates particularly strong performance. We hypothesize that this behavior is related to the high quality of beam search decoding in this domain, as evidenced by the results in Table~\ref{tab:em-vs-beam}. This observation suggests that, in certain scenarios, prioritizing high-probability transcriptions through the policy gradient is especially beneficial. While EM-tok-b proves to be a robust method across diverse domains, the complex interaction between decoding strategies and adaptation objectives remains an important direction for future study.

\subsection{Different Adaptation Steps}
We analyze how the number of adaptation steps affects performance. We evaluate 1, 3, 5, 10, and 20 adaptation steps on the LS-GS-10 dataset. Figure~\ref{fig:step-ablate} compares the proposed TTA methods EM-seq, EM-tok, and EM-tok-b with Greedy-EM.

\begin{figure}
  \centering
  \includegraphics[width=\linewidth]{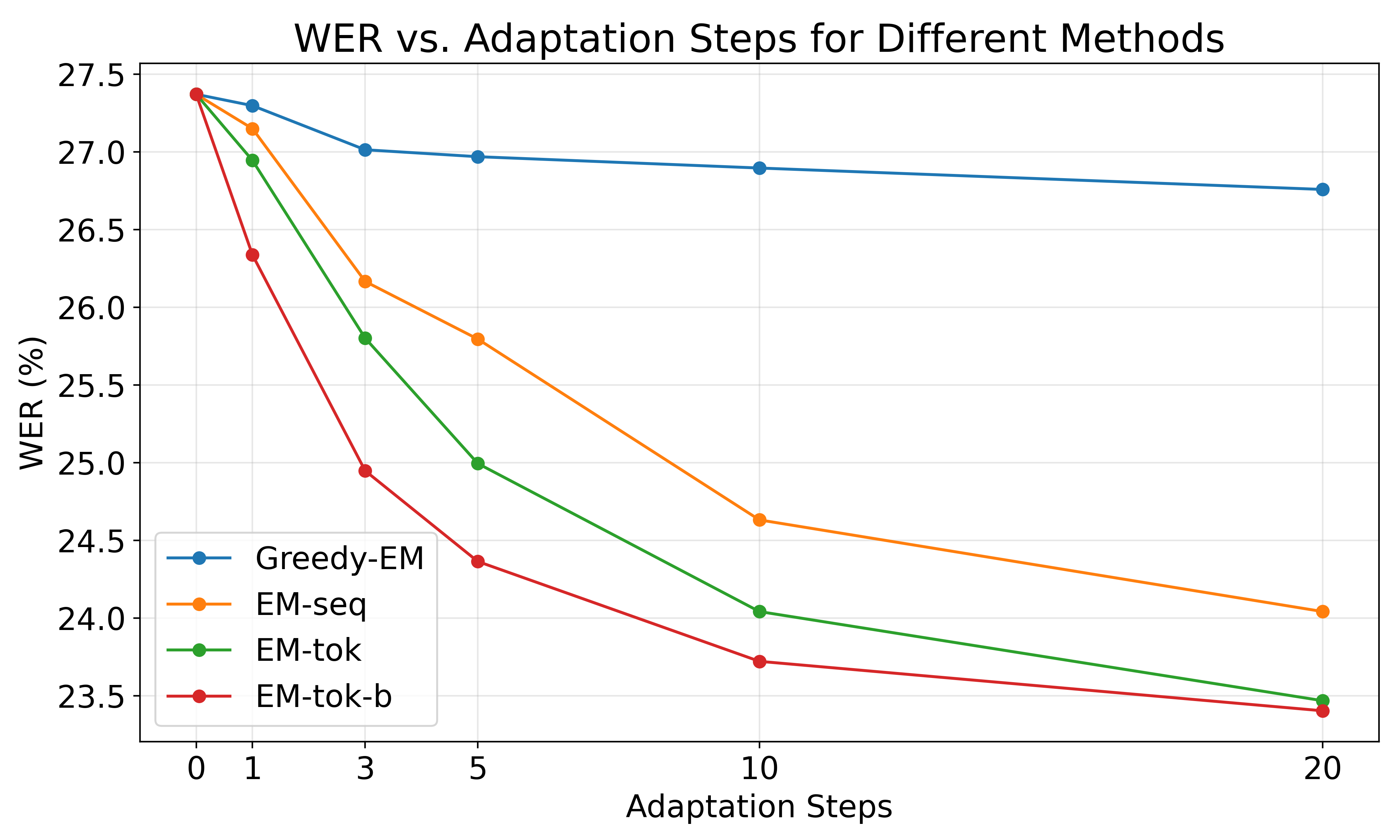}
  \caption{WER(\%) of different TTA methods over different adaptation steps on LS-GS-10.}
  \label{fig:step-ablate}
\end{figure}

Overall, the results show that increasing the number of adaptation steps consistently reduces WER across all methods. Among the proposed methods, EM-tok-b achieves the best performance, followed by EM-tok and EM-seq, demonstrating the advantage of token-level optimization and the additional benefit brought by using beam search transcriptions. Moreover, EM-tok-b consistently surpasses all other methods even with fewer adaptation steps, showing strong sample efficiency. The results indicate that incorporating beam search transcriptions provides higher-quality optimization signals, which substantially enhance overall effectiveness.

\subsection{Qualitative Analysis}
To provide an intuitive understanding of these improvements, Table \ref{tab:case} presents representative examples where the EM-tok-b method corrects errors made by the source model. We observe that EM-tok-b primarily resolves three types of errors. First, it fixes grammatical mistakes caused by incorrect word boundary recognition (e.g., misinterpreting ``athlete" as ``a fleet"). Second, it restores the intended meaning from phonetically similar but meaningless transcriptions. Third, it breaks persistent repetition loops during decoding.
\input{tables/case}

\subsection{Efficiency Analysis}
\label{sec:eff}
Figure~\ref{fig:time} compares the WER against the average runtime (in seconds) for a 1-second utterance across different methods on the LS-GS-10 dataset. We evaluate the proposed TTA methods with sample sizes $G \in \{4, 16, 64\}$ and Greedy-EM.

Our results demonstrate that EM-tok-b with $G=16$ provides a good balance between adaptation performance and computational overhead. While Greedy-EM is computationally cheaper, our proposed methods achieve significantly lower WER. We observe that EM-tok-b is notably more efficient than EM-tok under the same sample size. This efficiency gain stems from the fact that beam search typically yields more stable and concise transcriptions. In contrast, random sampling can produce significantly longer transcriptions, increasing the maximum sequence length within a batch and leading to higher computational costs during the process. For instance, at $G=16$, EM-tok-b reduces the total runtime by nearly 35\% compared to EM-tok while maintaining competitive performance. These findings suggest that utilizing beam search transcriptions is a highly practical trick.

\begin{figure}
  \centering
  \includegraphics[width=0.95\linewidth]{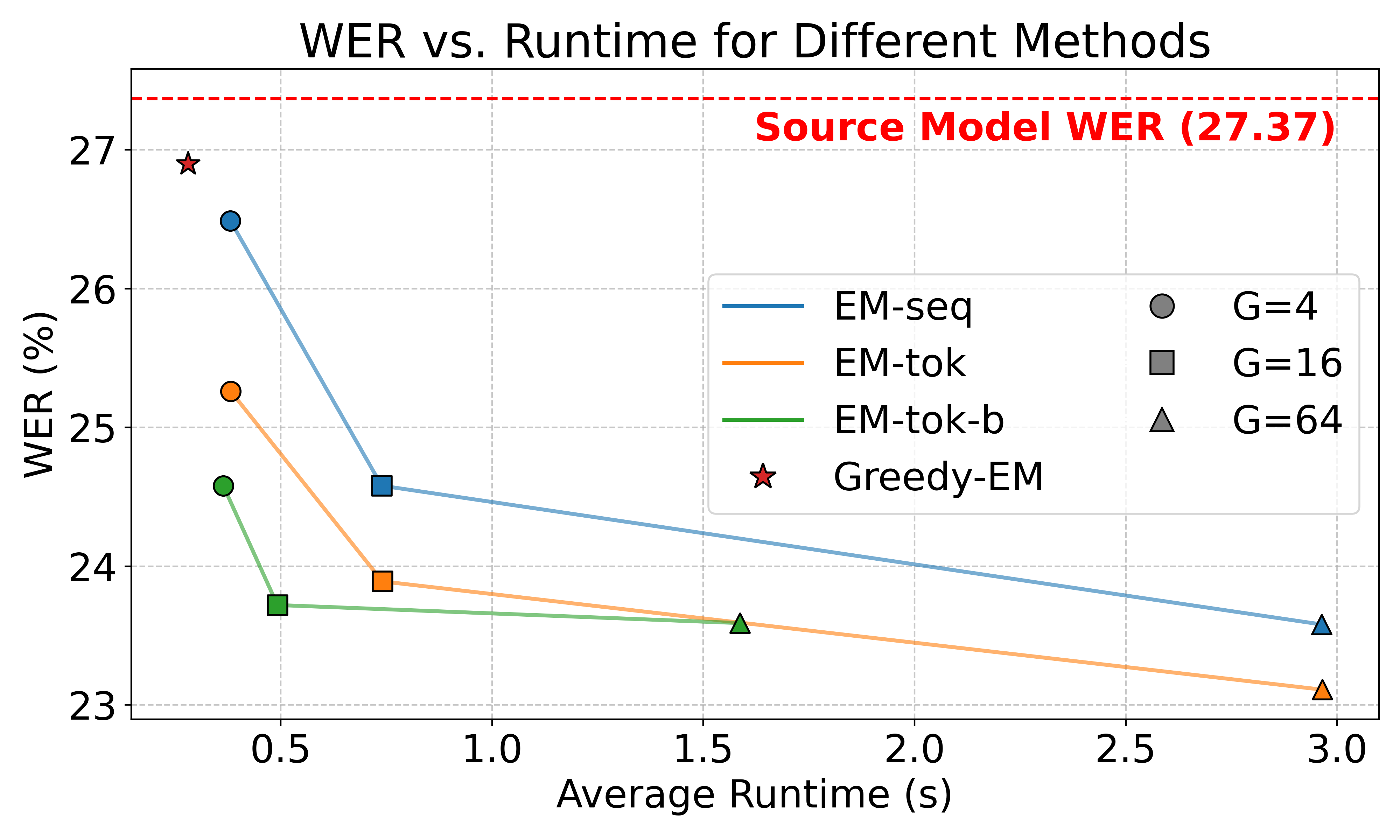}
    \caption{Comparison of WER and average runtime for a 1-second utterance for different methods on LS-GS-10. The dashed red line marks the WER of the unadapted source model. 
    }
    \label{fig:time}
\end{figure}

\subsection{Different Model Sizes}
We validate the proposed method on different sizes of Whisper models. We use the LS-GS-10 dataset. Table~\ref{tab:model-ablate} summarizes the results. Results show that our proposed methods consistently outperform the baselines across different model sizes, yielding significantly better WER than the source model itself.

\input{tables/model-ablate}

\section{Conclusion}
In this work, we resolve the theoretical ambiguity surrounding EM in TTA for autoregressive models by deriving a mathematically correct gradient expression. We introduce the complete EM objective and unify previously fragmented heuristics within a principled framework. Through extensive evaluation on TTA for Whisper ASR across more than 20 diverse domains, we demonstrate that our theoretically grounded formulation is effective across diverse settings and offer important insights into the behavior of the proposed objective. Overall, our findings establish a solid theoretical foundation for EM in TTA for autoregressive models.

\section{Acknowledgement}   
This work was supported by the Ministry of Education (MOE) of Taiwan under the project Taiwan Centers of Excellence in Artificial Intelligence, through the NTU Artificial Intelligence Center of Research Excellence (NTU AI-CoRE)”.
 
We thank the National Center for High-performance Computing (NCHC) of the National Applied Research Laboratories (NARLabs) in Taiwan for providing computational and storage resources.

\section{Generative AI Use Disclosure}
Generative AI tools assisted in the linguistic polishing of the manuscript. 
The authors remain solely responsible for the research design, experiments, analysis, and reported results. 
AI tools did not contribute to the substantive scientific content.

\bibliographystyle{IEEEtran}
\bibliography{mybib}

\clearpage
\section{Appendix}
\subsection{The Special Case of Non-Autoregressive ASR}
Our framework also provides a theoretical perspective on earlier TTA methods for non-autoregressive ASR~\cite{suta}. In non-autoregressive models, the output at each frame is independent of other frames. In this setting, the token-level entropy estimator, $\hat{H}_{tok}(\bm{y})$, does not depend on a particular output sequence $\bm{y}$ and therefore becomes constant across all possible transcriptions.
\begin{subequations}
    \begin{align}
        &\mathbb{E}_{\bm{y}\sim \pi_{\theta}}[\hat{H}_{tok}(\bm{y})] \\
        &= \mathbb{E}_{\bm{y}\sim \pi_{\theta}}\left[-\sum_{t=1}^{|\bm{y}|}\sum_{j\in\mathcal{V}}\pi_{\theta}(j|\bm{y}_{<t})\log\pi_{\theta}(j|\bm{y}_{<t})\right] \\
        &= \mathbb{E}_{\bm{y}\sim \pi_{\theta}}\left[-\sum_{t=1}^{|\bm{y}|}\sum_{j\in\mathcal{V}}\pi_{\theta}(y_t=j)\log\pi_{\theta}(y_t=j)\right] \\
        &= -\sum_{t=1}^{|\bm{y}|}\sum_{j\in\mathcal{V}}\pi_{\theta}(y_t=j)\log\pi_{\theta}(y_t=j)
        \label{eq:NARASR}
    \end{align}
\end{subequations}

Consequently, the expectation term vanishes, and EM reduces to directly minimizing the token-level entropy estimator. This confirms that, if we ignore the \textit{blank-symbol collapse} in non-autoregressive ASR (i.e., treating $c\phi\phi a\phi t$ and $ca\phi\phi t\phi$ as distinct transcriptions), directly adopting Eq.~\ref{eq:NARASR} as an objective as in \cite{suta} is theoretically sound.

With our framework, it is now straightforward to deduce the correct EM for non-autoregressive ASR, given the collapse rule. Using the sequence-level estimator and the results in Section~\ref{sec:math-em-seq}, we show that EM is equivalent to reinforcement learning with cost $-\log \pi_{\theta}(\bm{y})$. Notably, $-\log \pi_{\theta}(\bm{y})$ corresponds exactly to the CTC objective. This leads to a key insight: EM is equivalent to \textbf{performing reinforcement learning on sampled outputs with the CTC loss as the cost}.

\section{Limitation and Future Work}
Despite the theoretical and empirical advantages of our unified EM objective, several limitations remain. First, the requirement for multiple samples to ensure accurate gradient estimation increases the computational overhead during the adaptation phase. Compared to the teacher-forcing heuristic, which only requires a single sample, our approach is much slower and requires more VRAM, which is impractical for real-time deployment. Second, while our findings are mathematically general, our experimental validation in this work focuses exclusively on TTA for the Whisper ASR. The performance characteristics of the derived gradient decomposition may vary across different architectures and tasks.

The potential for future research following this work is extensive. A straightforward direction involves exploring the non-trivial interaction between decoding strategies and adaptation objectives, particularly understanding why specific domains favor certain methods. Additionally, developing techniques to accelerate the sampling process or utilizing more efficient gradient estimators could reduce the current computational costs.

Finally, since we have identified that many existing adaptation methods utilize incomplete gradient expressions, there is a significant opportunity to apply our approach to fix the current methods to improve their performance.

\end{document}

%% file: tables/link.tex
\begin{table*}[t]
  \caption{Comparison with EM objectives and corresponding gradients in previous works. Note that while $\mathcal{L}_{EM}^{seq}$ and $\mathcal{L}_{EM}^{tok}$ appear as distinct formulations, they are both correct since their respective gradients with respect to $\theta$ are both unbiased estimators for EM.}
  \label{tab:link}
  \centering
  % Increase vertical padding (1.5 is usually good for fractions/gradients)
  \renewcommand{\arraystretch}{1.3}
  \begin{tabular}{ccccc}
    \toprule
    \textbf{Work} & \textbf{Method Name} & \textbf{Gradient w.r.t. $\theta$} & \textbf{Corresponding Objective (ours)} & \textbf{Correctness} \\
    \midrule
    \cite{gao2025oneshotentropyminimization, sgem, cea, slmtta} & - & $\sum_{t=1}^{|\bm{y}|}\nabla_{\theta}\mathcal{H}(\pi_{\theta}(\cdot|\bm{y}_{<t}))$ & $\mathcal{L}_{ENT}^{tok}$ & \textcolor{red}{\ding{55}} \\ 
    \multirow{3}{*}{\cite{agarwal2025unreasonable}} & EM-FT & $\sum_{t=1}^{|\bm{y}|}\nabla_{\theta}\mathcal{H}(\pi_{\theta}(\cdot|\bm{y}_{<t}))$ & $\mathcal{L}_{ENT}^{tok}$ & \textcolor{red}{\ding{55}} \\
     & EM-RL-token & $\sum_{t=1}^{|\bm{y}|}\mathcal{H}(\pi_{\theta}(\cdot|\bm{y}_{<t}))\nabla_{\theta}\log\pi_{\theta}(\bm{y})$ & $\mathcal{L}_{PG}^{tok}$ & \textcolor{red}{\ding{55}} \\
     & EM-RL-sequence & $-\sum_{t=1}^{|\bm{y}|}\log\pi_{\theta}(y_t|\bm{y}_{<t})\nabla_{\theta}\log\pi_{\theta}(\bm{y})$ & $\mathcal{L}_{EM}^{seq}=\mathcal{L}_{PG}^{seq}$ & \textcolor{green}{\ding{51}} \\
     \midrule
     \multirow{2}{*}{Ours} & \multirow{2}{*}{-} &
        $\sum_{t=1}^{|\bm{y}|}\mathcal{H}(\pi_{\theta}(\cdot|\bm{y}_{<t}))\nabla_{\theta}\log\pi_{\theta}(\bm{y})$
     & \multirow{2}{*}{$\mathcal{L}_{EM}^{tok}=\mathcal{L}_{PG}^{tok}+\mathcal{L}_{ENT}^{tok}$} & \multirow{2}{*}{\textcolor{green}{\ding{51}}} \\
     & & $+ \sum_{t=1}^{|\bm{y}|}\nabla_{\theta}\mathcal{H}(\pi_{\theta}(\cdot|\bm{y}_{<t}))$ & & \\
    \bottomrule
  \end{tabular}
\end{table*}

%% file: tables/ls-big-table.tex
% tables/ls-big-table.tex
\begingroup
\small
\begin{tabular}{lccccccccccc}
    \toprule
    \textbf{Method} & \textbf{LS-AA-10} & \textbf{LS-AC-10} & \textbf{LS-BA-10} & \textbf{LS-CM-10} & \textbf{LS-GS-10} & \textbf{LS-MU-10} & \textbf{LS-NB-10} & \textbf{LS-SD-10} & \textbf{LS-TP-10} & \textbf{LS-VC-10} & \textbf{Avg.} \\
    \midrule
    \textbf{Source}      & 17.63 & 15.23 & 22.79 & 24.25 & 27.37 & 28.36 & 35.32 & 15.17 & 17.77 & 21.36 & 22.53 \\
    \textbf{Greedy-EM}   & 17.30 & 15.08 & 22.29 & 23.59 & 26.41 & 27.46 & 33.75 & 14.82 & 17.09 & 21.32 & 21.91 \\
    \midrule
    \textbf{EM-seq}      & 16.50 & 14.46 & 21.46 & 23.48 & 25.81 & 26.68 & 34.20 & 14.14 & 16.50 & 20.16 & 21.34 \\
    \textbf{EM-tok}      & 16.05 & 14.08 & 20.96 & 22.56 & 24.95 & 26.06 & 33.55 & 13.69 & 16.10 & 19.65 & 20.77 \\
    \textbf{EM-tok-b}    & \textbf{14.91} & \textbf{12.80} & \textbf{19.31} & \textbf{20.95} & \textbf{23.16} & \textbf{24.03} & \textbf{30.81} & \textbf{12.87} & \textbf{14.86} & \textbf{17.75} & \textbf{19.15} \\
    \bottomrule
\end{tabular}
\endgroup

%% file: tables/l2-big-table.tex
% tables/l2-table.tex
\begingroup
\small
\begin{tabular}{lccccccc}
    \toprule
    \textbf{Method} & \textbf{Arabic} & \textbf{Chinese} & \textbf{Hindi} & \textbf{Korean} & \textbf{Spanish} & \textbf{Vietnamese} & \textbf{Avg.} \\
    \midrule
    \textbf{Source}      & 19.22 & 22.71 & 10.39 & 14.56 & 18.31 & 30.89 & 19.35 \\
    \textbf{Greedy-EM}   & 18.45 & 19.67 & 10.27 & 14.23 & 19.47 & 30.53 & 18.77 \\
    \midrule
    \textbf{EM-seq}      & 16.82 & 20.23 &  9.52 & 12.81 & 18.04 & 28.67 & 17.68 \\
    \textbf{EM-tok}      & 16.21 & \textbf{18.83} &  9.30 & 12.45 & 17.57 & 27.95 & 17.05 \\
    \textbf{EM-tok-b}    & \textbf{15.54} & 18.84 & \textbf{8.63} & \textbf{11.86} & \textbf{15.19} & \textbf{27.17} & \textbf{16.21} \\
    \bottomrule
\end{tabular}
\endgroup

%% file: tables/mls-big-table.tex
% tables/mls-table.tex
\begingroup
\small
\begin{tabular}{lcccccccc}
    \toprule
    \textbf{Method} & \textbf{Dutch} & \textbf{French} & \textbf{German} & \textbf{Italian} & \textbf{Polish} & \textbf{Portuguese} & \textbf{Spanish} & \textbf{Avg.} \\
    \midrule
    \textbf{Source}      & 30.88 & 24.75 & 19.85 & 32.86 & 25.31 & 23.98 & 14.42 & 24.58 \\
    \textbf{Greedy-EM}   & 30.65 & 24.53 & 19.14 & 32.27 & 24.87 & 23.13 & 14.00 & 24.08 \\
    \midrule
    \textbf{EM-seq}      & 30.67 & 24.17 & 19.55 & 32.04 & 25.12 & 23.62 & 14.05 & 24.17 \\
    \textbf{EM-tok}      & 30.58 & 24.09 & 19.51 & 31.92 & 24.84 & 23.74 & 13.83 & 24.07 \\
    \textbf{EM-tok-b}    & \textbf{29.39} & \textbf{23.12} & \textbf{17.70} & \textbf{30.59} & \textbf{23.20} & \textbf{21.75} & \textbf{12.67} & \textbf{22.63} \\
    \bottomrule
\end{tabular}
\endgroup

%% file: tables/beam-search.tex
\begin{table}
    \centering
    \caption{WER (\%) comparison between the proposed TTA method and beam search. We only report partial results on two randomly selected domains from each dataset.}
    \label{tab:em-vs-beam}
    %\adjustbox{width=1.0\linewidth}{
        \begin{tabular}{cccc}
            \toprule
            \textbf{Dataset} & \textbf{Greedy} & \textbf{Beam Search} & \textbf{EM-tok-b} \\
            \midrule
            LS-GS-10 & 27.37 & 24.72 & \textbf{23.16} \\
            LS-TP-10 & 17.77 & 16.17 & \textbf{14.86} \\
            L2-Korean & 14.56 & 13.5 & \textbf{11.86} \\
            L2-Spanish & 18.31 & 17.03 & \textbf{15.19} \\
            MLS-Italian & 32.86 & 31.83 & \textbf{30.59} \\
            MLS-Polish & 25.31 & \textbf{22.08} & 23.2 \\
            \bottomrule
        \end{tabular}
    %}
\end{table}

%% file: tables/case.tex
\begin{table}[th]
\caption{Qualitative examples illustrating baseline errors and our proposed corrections. \textcolor{red}{Red} indicates errors, and \textcolor{ForestGreen}{green} denotes corrected or successfully recovered tokens.}
\label{tab:case}
\centering
\begin{tabular}{@{}p{0.95\columnwidth}@{}}
\toprule
\textbf{Example and Description} \\
\midrule
\textbf{Ground Truth:} he was \textcolor{ForestGreen}{an athlete} and \textcolor{ForestGreen}{a giant} \\
\textbf{Source:} he was \textcolor{red}{in a fleet} and \textcolor{red}{agent} \\
\textbf{EM-tok-b:} he was \textcolor{ForestGreen}{an athlete} and \textcolor{ForestGreen}{a giant} \\
\textit{Description: Corrects improper splitting and merging of short words in continuous speech.} \\
\midrule
\textbf{Ground Truth:} she was \textcolor{ForestGreen}{attractive and impertinent} especially the \textcolor{ForestGreen}{latter} \\
\textbf{Source:} she was \textcolor{red}{a truck dude i made a pair of doing it} especially the \textcolor{red}{matter} \\
\textbf{EM-tok-b:} she was \textcolor{ForestGreen}{attractive and impertinent} especially the \textcolor{ForestGreen}{latter} \\
\textit{Description: Restores the intended meaning from phonetically similar but meaningless transcriptions.} \\
\midrule
\textbf{Ground Truth:} \textcolor{ForestGreen}{a quarter past 10 half past} \\
\textbf{Source:} \textcolor{red}{one one 2 3 4 5 5 5 5 5 5 5 5 \dots} \\
\textbf{EM-tok-b:} \textcolor{ForestGreen}{a quarter past 10 half past} \\
\textit{Description: Halts continuous repetition errors.} \\
\bottomrule
\end{tabular}
\end{table}

%% file: tables/model-ablate.tex
\begin{table}[htbp]
    \centering
    \caption{WER (\%) comparison of different TTA methods on different Whisper model sizes. We use the LS-GS-10 dataset.}
    \label{tab:model-ablate}
    %\adjustbox{width=1.0\linewidth}{
        \begin{tabular}{lccc}
            \toprule
            \textbf{Method} & \textbf{Tiny} & \textbf{Small} & \textbf{Large} \\
            \midrule
            \textbf{Source} & 37.4 & 16.47 & 8.54 \\
            \textbf{Greedy-EM} & 36.59 & 15.94 & 8.21 \\
            \midrule
            \textbf{EM-seq} & 33.5 & 15.58 & 8.5 \\
            \textbf{EM-tok} & \textbf{33.1} & 15.11 & 7.86 \\
            \textbf{EM-tok-b} & 33.21 & \textbf{14.57} & \textbf{7.56} \\
            \bottomrule
        \end{tabular}
    %}
\end{table}

%% file: mybib.bib
@article{REINFORCE,
author = {Williams, Ronald J.},
title = {Simple Statistical Gradient-Following Algorithms for Connectionist Reinforcement Learning},
year = {1992},
issue_date = {May 1992},
publisher = {Kluwer Academic Publishers},
address = {USA},
volume = {8},
number = {3–4},
issn = {0885-6125},
url = {https://doi.org/10.1007/BF00992696},
doi = {10.1007/BF00992696},
journal = {Mach. Learn.},
month = may,
pages = {229–256},
numpages = {28}
}

@inproceedings{NEURIPS2020_92d1e1eb,
 author = {Baevski, Alexei and Zhou, Yuhao and Mohamed, Abdelrahman and Auli, Michael},
 booktitle = {Advances in Neural Information Processing Systems},
 editor = {H. Larochelle and M. Ranzato and R. Hadsell and M.F. Balcan and H. Lin},
 pages = {12449--12460},
 publisher = {Curran Associates, Inc.},
 title = {wav2vec 2.0: A Framework for Self-Supervised Learning of Speech Representations},
 url = {https://proceedings.neurips.cc/paper_files/paper/2020/file/92d1e1eb1cd6f9fba3227870bb6d7f07-Paper.pdf},
 volume = {33},
 year = {2020}
}

@misc{burchi2021efficientconformerprogressivedownsampling,
      title={Efficient conformer: Progressive downsampling and grouped attention for automatic speech recognition}, 
      author={Maxime Burchi and Valentin Vielzeuf},
      year={2021},
      eprint={2109.01163},
      archivePrefix={arXiv},
      primaryClass={eess.AS},
      url={https://arxiv.org/abs/2109.01163}, 
}

@inproceedings{memo,
author = {Zhang, Marvin and Levine, Sergey and Finn, Chelsea},
title = {MEMO: test time robustness via adaptation and augmentation},
year = {2022},
isbn = {9781713871088},
publisher = {Curran Associates Inc.},
address = {Red Hook, NY, USA},
articleno = {2799},
numpages = {14},
location = {New Orleans, LA, USA},
series = {NIPS '22}
}

@inproceedings{tent,
  title={Tent: Fully Test-Time Adaptation by Entropy Minimization},
  author={Wang, Dequan and Shelhamer, Evan and Liu, Shaoteng and Olshausen, Bruno and Darrell, Trevor},
  booktitle={International Conference on Learning Representations},
  year={2020}
}

@inproceedings{eata,
  title={Efficient test-time model adaptation without forgetting},
  author={Niu, Shuaicheng and Wu, Jiaxiang and Zhang, Yifan and Chen, Yaofo and Zheng, Shijian and Zhao, Peilin and Tan, Mingkui},
  booktitle={International conference on machine learning},
  pages={16888--16905},
  year={2022},
  organization={PMLR}
}

@inproceedings{cotta,
  title={Continual test-time domain adaptation},
  author={Wang, Qin and Fink, Olga and Van Gool, Luc and Dai, Dengxin},
  booktitle={Proceedings of the IEEE/CVF Conference on Computer Vision and Pattern Recognition},
  pages={7201--7211},
  year={2022}
}

@inproceedings{
sar,
title={Towards Stable Test-time Adaptation in Dynamic Wild World},
author={Shuaicheng Niu and Jiaxiang Wu and Yifan Zhang and Zhiquan Wen and Yaofo Chen and Peilin Zhao and Mingkui Tan},
booktitle={The Eleventh International Conference on Learning Representations },
year={2023},
url={https://openreview.net/forum?id=g2YraF75Tj}
}

@inproceedings{suta,
  author={Guan-Ting Lin and Shang-Wen Li and Hung-yi Lee},
  title={{Listen, Adapt, Better WER: Source-free Single-utterance Test-time Adaptation for Automatic Speech Recognition}},
  year=2022,
  booktitle={Proc. Interspeech 2022},
  pages={2198--2202},
  doi={10.21437/Interspeech.2022-600},
  issn={2308-457X}
}

@inproceedings{sgem,
  author={Changhun Kim and Joonhyung Park and Hajin Shim and Eunho Yang},
  title={{SGEM: Test-Time Adaptation for Automatic Speech Recognition via Sequential-Level Generalized Entropy Minimization}},
  year=2023,
  booktitle={Proc. INTERSPEECH 2023},
  pages={3367--3371},
  doi={10.21437/Interspeech.2023-1282},
  issn={2308-457X}
}

@inproceedings{litta,
  title     = {LI-TTA: Language Informed Test-Time Adaptation for Automatic Speech Recognition},
  author    = {Eunseop Yoon and Hee Suk Yoon and John Harvill and Mark Hasegawa-Johnson and Chang D. Yoo},
  year      = {2024},
  booktitle = {Interspeech 2024},
  pages     = {3490--3494},
  doi       = {10.21437/Interspeech.2024-1829},
  issn      = {2958-1796},
}

@inproceedings{cea,
    title = "Advancing Test-Time Adaptation in Wild Acoustic Test Settings",
    author = "Liu, Hongfu  and
      Huang, Hengguan  and
      Wang, Ye",
    editor = "Al-Onaizan, Yaser  and
      Bansal, Mohit  and
      Chen, Yun-Nung",
    booktitle = "Proceedings of the 2024 Conference on Empirical Methods in Natural Language Processing",
    month = nov,
    year = "2024",
    address = "Miami, Florida, USA",
    publisher = "Association for Computational Linguistics",
    url = "https://aclanthology.org/2024.emnlp-main.405/",
    doi = "10.18653/v1/2024.emnlp-main.405",
    pages = "7138--7155",
}

@inproceedings{dsuta,
    title = "Continual Test-time Adaptation for End-to-end Speech Recognition on Noisy Speech",
    author = "Lin, Guan-Ting  and
      Huang, Wei Ping  and
      Lee, Hung-yi",
    editor = "Al-Onaizan, Yaser  and
      Bansal, Mohit  and
      Chen, Yun-Nung",
    booktitle = "Proceedings of the 2024 Conference on Empirical Methods in Natural Language Processing",
    month = nov,
    year = "2024",
    address = "Miami, Florida, USA",
    publisher = "Association for Computational Linguistics",
    url = "https://aclanthology.org/2024.emnlp-main.1116/",
    doi = "10.18653/v1/2024.emnlp-main.1116",
    pages = "20003--20015",
}

@misc{suta-lm,
      title={SUTA-LM: Bridging Test-Time Adaptation and Language Model Rescoring for Robust ASR}, 
      author={Wei-Ping Huang and Guan-Ting Lin and Hung-yi Lee},
      year={2025},
      eprint={2506.11121},
      archivePrefix={arXiv},
      primaryClass={cs.CL},
      url={https://arxiv.org/abs/2506.11121}, 
}

@inproceedings{rloo,
    title = "Back to Basics: Revisiting {REINFORCE}-Style Optimization for Learning from Human Feedback in {LLM}s",
    author = {Ahmadian, Arash  and
      Cremer, Chris  and
      Gall{\'e}, Matthias  and
      Fadaee, Marzieh  and
      Kreutzer, Julia  and
      Pietquin, Olivier  and
      {\"U}st{\"u}n, Ahmet  and
      Hooker, Sara},
    editor = "Ku, Lun-Wei  and
      Martins, Andre  and
      Srikumar, Vivek",
    booktitle = "Proceedings of the 62nd Annual Meeting of the Association for Computational Linguistics (Volume 1: Long Papers)",
    month = aug,
    year = "2024",
    address = "Bangkok, Thailand",
    publisher = "Association for Computational Linguistics",
    url = "https://aclanthology.org/2024.acl-long.662/",
    doi = "10.18653/v1/2024.acl-long.662",
    pages = "12248--12267"
}

@inproceedings{
dapo,
title={{DAPO} : Improving Multi-Step Reasoning Abilities of Large Language Models with Direct Advantage-Based Policy Optimization},
author={Jiacai Liu and Chaojie Wang and Chris Yuhao Liu and Liang Zeng and Rui Yan and Yiwen Sun and Yang Liu},
booktitle={The Thirty-ninth Annual Conference on Neural Information Processing Systems},
year={2025},
url={https://openreview.net/forum?id=77eEDRhPkQ}
}

@misc{whisper,
      title={Robust Speech Recognition via Large-Scale Weak Supervision}, 
      author={Alec Radford and Jong Wook Kim and Tao Xu and Greg Brockman and Christine McLeavey and Ilya Sutskever},
      year={2022},
      eprint={2212.04356},
      archivePrefix={arXiv},
      primaryClass={eess.AS},
      url={https://arxiv.org/abs/2212.04356}, 
}

@misc{gao2025oneshotentropyminimization,
      title={One-shot Entropy Minimization}, 
      author={Zitian Gao and Lynx Chen and Haoming Luo and Joey Zhou and Bryan Dai},
      year={2025},
      eprint={2505.20282},
      archivePrefix={arXiv},
      primaryClass={cs.CL},
      url={https://arxiv.org/abs/2505.20282}, 
}

@article{agarwal2025unreasonable,
  title={The unreasonable effectiveness of entropy minimization in llm reasoning},
  author={Agarwal, Shivam and Zhang, Zimin and Yuan, Lifan and Han, Jiawei and Peng, Hao},
  journal={arXiv preprint arXiv:2505.15134},
  year={2025}
}

@article{slot,
  publtype={informal},
  author={Yang Hu and Xingyu Zhang and Xueji Fang and Zhiyang Chen and Xiao Wang and Huatian Zhang and Guojun Qi},
  title={SLOT: Sample-specific Language Model Optimization at Test-time},
  year={2025},
  month={May},
  cdate={1746057600000},
  journal={CoRR},
  volume={abs/2505.12392},
  url={https://doi.org/10.48550/arXiv.2505.12392}
}

@inproceedings{
kuhn2023semantic,
title={Semantic Uncertainty: Linguistic Invariances for Uncertainty Estimation in Natural Language Generation},
author={Lorenz Kuhn and Yarin Gal and Sebastian Farquhar},
booktitle={The Eleventh International Conference on Learning Representations },
year={2023},
url={https://openreview.net/forum?id=VD-AYtP0dve}
}

@article{empo,
  title={Right Question is Already Half the Answer: Fully Unsupervised LLM Reasoning Incentivization},
  author={Zhang, Qingyang and Wu, Haitao and Zhang, Changqing and Zhao, Peilin and Bian, Yatao},
  journal={Advances in neural information processing systems},
  year={2025}
}

@article{seedgrpo,
  title={Seed-grpo: Semantic entropy enhanced grpo for uncertainty-aware policy optimization},
  author={Chen, Minghan and Chen, Guikun and Wang, Wenguan and Yang, Yi},
  journal={arXiv preprint arXiv:2505.12346},
  year={2025}
}

@misc{slmtta,
      title={SLM-TTA: A Framework for Test-Time Adaptation of Generative Spoken Language Models}, 
      author={Yuan-Kuei Wu and Yang Liu and Yiteng Huang and Zhaojun Yang and Haibin Wu and Ruizhe Huang and Yi-Te and Hsu and Shuyu Kong and Ming Sun and Florian Metze and Li Wan},
      year={2025},
      eprint={2512.24739},
      archivePrefix={arXiv},
      primaryClass={cs.SD},
      url={https://arxiv.org/abs/2512.24739}, 
}

@inproceedings{l2arctic,
  title     = {L2-ARCTIC: A Non-native English Speech Corpus},
  author    = {Guanlong Zhao and Sinem Sonsaat and Alif Silpachai and Ivana Lucic and Evgeny Chukharev-Hudilainen and John Levis and Ricardo Gutierrez-Osuna},
  year      = {2018},
  booktitle = {Interspeech 2018},
  pages     = {2783--2787},
  doi       = {10.21437/Interspeech.2018-1110},
  issn      = {2958-1796},
}

@article{mls,
  title={MLS: A Large-Scale Multilingual Dataset for Speech Research},
  author={Vineel Pratap and Qiantong Xu and Anuroop Sriram and Gabriel Synnaeve and Ronan Collobert},
  journal={ArXiv},
  year={2020},
  volume={abs/2012.03411}
}

@inproceedings{cv,
  author = {Ardila, R. and Branson, M. and Davis, K. and Henretty, M. and Kohler, M. and Meyer, J. and Morais, R. and Saunders, L. and Tyers, F. M. and Weber, G.},
  title = {Common Voice: A Massively-Multilingual Speech Corpus},
  booktitle = {Proceedings of the 12th Conference on Language Resources and Evaluation (LREC 2020)},
  pages = {4211--4215},
  year = 2020
}
